\newtheorem{Lemma}{Lemma}[section]
\newtheorem{proof}{Proof}[section]
\newtheorem{Proposition}{Proposition}[section]
\begin{document}
	
\title{Comments on Mathematical Aspects of the Bir\'o--Néda Model}


\author{Ilda Inácio}
\email{ilda@ubi.pt}
\affiliation{Centro de Matemática e Aplicações (CMA-UBI), Universidade da Beira Interior, 6201-001, Covilhã, Portugal}
\author{Jos\'e Velhinho}
\email{jvelhi@ubi.pt}
\affiliation{Faculdade de Ci\^encias and FibEnTech-UBI, Universidade da Beira Interior, R. Marqu\^es D'\'Avila e Bolama, 6201-001 Covilh\~a, Portugal}
 

\begin{abstract}
{\bf Abstract:} We address two mathematical aspects of the  Bir\'o--Néda dynamical model, recently applied in the statistical analysis of several and varied complex phenomena. First, we show that a given implicit assumption ceases to be valid  outside the most simple and common cases, and we analyze  the consequences thereof, in what the formulation of the model  and probability  conservation  is concerned. Second, we revisit the transient behavior in the case of a constant reset rate and a constant or linear growth rate, improving on a previous analysis by including more general initial conditions.
\end{abstract}

\keywords{probability conservation; stationary distribution; transient behavior}

\maketitle

\section{Introduction}
\label{sec:Intro}

In recent years, Biró, Néda, and collaborators have successfully applied a simple dynamical model to address the behavior of complex systems of a varied nature, including high energy physics systems, ecology, population distribution,
scientific citations, and  social media popularity 
\cite{1,2, face,8}. In addition, particularly promising are more recent applications to income and wealth distributions 
\cite{4,5,6}. These latter applications contribute to an ongoing effort to model social and economic phenomena by means of simple mathematical models, often inspired in statistical physics and stochastic  processes (see e.g., \cite{w1,w2,w3}). Going beyond the standard master equation for diffusion in a discrete system, the authors intend to explore unidirectional growth processes in order to model the dynamics of complex systems, including open ones. These  processes explicitly break the detailed balance condition, and the unidirectional growth is  supplemented with a mechanism designed to ensure the existence of interesting stationary distributions 
\cite{1,2}.  

Specifically, the model concerns a probability distribution 
$\{P_n,n\geq 0\}$ evolving in time according to the following differential equations:
\begin{equation}
\label{0}
\dot{P_n}=-(\gamma_n + \mu_n) P_n + \mu_{n-1}P_{n-1}, \qquad n\geq 1,\\
\end{equation}
\begin{equation}
\label{1}
\dot{P}_0 =\sum_{n=0}^{\infty}\gamma_n P_n -  (\gamma_0 +\mu_0) P_0.  
\end{equation}

 Within this class of models, each particular case is parametrized by the sequences   $\{\mu_n\}$ and 
$\{\gamma_n\}$.
One can recognize in (\ref{0}) the differential equations for the probability of each state in a  continuous-time Markov chain \cite{VK,R}.   
In  particular, $\{\mu_n\}$ characterizes the dynamics  of a pure birth process \cite{VK,R}.   
However, the one-step transitions are supplemented by nonlocal transitions to the state $n=0$, modifying the 
dynamics in order to allow for nontrivial stationary solutions.

In fact, the time-independent version of (\ref{0}), i.e., with 
$\dot{P_n}=0$, for all $n\geq 1$, 
 can be used by iteration as a generator for probability distributions, starting from some $Q_0>0$. 
Several interesting distributions have been considered and discussed in applications, for various $\{\gamma_n\}$ 
and $\{\mu_n\}$. 
In particular, for constant $\gamma_n$, the standard exponential distribution $Q_n=Q_0e^{-\beta n}$ emerges for the case of constant $\mu_n$, whereas a linear growth 
$\mu_n=\sigma(n+b)$ produces the so-called Waring distribution \cite{War}.
Other distributions of interest are generated by other sequences, with constant and nonconstant $\gamma_n$ and a variety of behaviors for  $\mu_n$, including power-law and exponential growth (see Refs. \cite{1,2}, and  also 
\cite{cor} for the emergence of distributions in a related context). 


The purpose of the current article is twofold. In Section \ref{trans}, we address the behavior of the transient, 
time-dependent solutions of the Biró--Néda model for the two most simple and common cases, namely with constant 
$\gamma_n$ and constant or linear $\mu_n$. In this respect, we develop the analysis  first put forward in Ref. 
\cite{3},  hopefully contributing to the effort of exploring the behavior of these models, and in particular to improve the understanding of the approach to stationarity.

In the first part of the article, we point out and address technical challenges faced by the Biró--Néda model, outside the  simple cases mentioned above. First, we will argue in Section~\ref{stat}   that the  current formulation of  the model requires modification, in order to accommodate the cases such that  the sequence of values
$\mu_n/\gamma_n$ grow faster than $n$. In fact, for those cases, we will show that the model does not actually admit 
(nontrivial) stationary solutions. The above-mentioned iteration process is of course present, however the only stationary solution compatible with 
(\ref{1}) gives $Q_0=0$. On the technical side, we  disprove a claim made in Ref. \cite{1}, showing that it fails for those fast sequences $\mu_n/\gamma_n$.

As a second contribution, we discuss the solution to the above  issue.  We will see in Section \ref{form} that 
there is a natural and simple modification of Equation (\ref{1}) compatible with the  stationary solutions, however it comes at a cost. In fact,
we will argue  that probability is only conserved close to stationarity, which in this case means that it can only be conserved  starting from  initial  distributions that are quite different from   typical initial states. 
This consequence seems to be  unavoidable, since the 
stationary solution introduces a nonzero boundary term in the equation for probability conservation.

It should  again be said that no such issues affect the most simple and most common applications discussed by Biró, Néda, and collaborators, which most frequently involve simple situations for which the above-mentioned claim made in Ref.  \cite{1} is actually valid. Nevertheless, the departure from those simple cases is relevant, and  our remarks shed  light on the limits of applicability of the model, hopefully contributing to improve the mathematical formulation and the mathematical status of the model itself. 


\section{Stationary Solution}
\label{stat}

In this section we will disprove a general claim made in Ref. \cite{1} and show that, for the cases where that claim is not valid, the current formulation of the Biró--Néda model \cite{1,2} does not admit a nontrivial stationary solution.

The
Biró--Néda model  consists of the system of Equations (\ref{0}) and (\ref{1}) for the  evolution of a probability  distribution 
$\{P_n(t),n\geq 0\}$, 
 where the dot stands for time derivative, and time $t$ takes values in 
$[0,\infty[$. Following Refs. \cite{1,2}, we introduce  parameters $\lambda_n$:
\begin{equation}
\lambda_n=\mu_n+\gamma_n, \qquad n\geq 0,
\end{equation}
where all values $\mu_n$  and $\gamma_n$ are positive. The sequence of values  $\gamma_n$ is referred to by the authors of 
Refs. \cite{1,2} as the loss rate, or reset rate, and $\mu_n$ as the growth rate.

Let us then suppose that a stationary solution of the model exists, i.e., such that $\dot{P}_n=0$, 
for all $t$, for all $n\geq 0$. Let $P_n(t)=Q_n$ be that solution,  where
 $\{Q_n, n\geq 0\}$ is some sequence of real numbers satisfying:
\begin{equation}
\label{2}
0=-\lambda_n Q_n + \mu_{n-1}Q_{n-1}, \qquad \forall n\geq 1,
\end{equation}
and a corresponding condition coming from 
 Equation (\ref{1}), which is:
\begin{equation}
\label{g5}
\sum_{n=0}^{\infty}\gamma_n Q_n=Q_0\lambda_0.
\end{equation}

On the other hand, one can easily obtain the expression for $Q_n$, for all $n\geq 1$, in terms of $Q_0$, 
by a standard iteration typical of these systems \cite {R}. Starting from $Q_0$, the  iteration of system (\ref{2}) gives:
\begin{equation}\label{22}
 Q_n = Q_0 \frac{\mu_0\mu_1\ldots\mu_{n-1}}{\lambda_1\lambda_2\ldots\lambda_{n}}, \qquad n\geq 1.
\end{equation}

Following Ref.  \cite{1}, let us rewrite (\ref{22}) in the form   (valid also for $n=0$):
\begin{equation}
\label{g3}
Q_n=\frac{Q_0\lambda_0}{\gamma_n}r_n\prod_{k=0}^{n}\frac{1}{1+r_k}, \quad \forall n\geq 0, \quad \mbox{where}
\quad
 r_k:=\frac{\gamma_k}{\mu_k}.
\end{equation}

Again as in Ref. \cite{1}, we introduce also a special notation $S_0$ for the following infinite sum:
 \begin{equation}
\label{S0}
S_0=\sum_{n=0}^{\infty}Z_n,
\end{equation}
where we have defined:
\begin{equation}
\label{3}
Z_n:=r_n\prod_{k=0}^{n}\frac{1}{1+r_k}.
\end{equation}

Note that  $S_0$ depends exclusively on the sequence $\{r_n=\gamma_n/\mu_n\}$.

It follows immediately from (\ref{g3}) that:
\begin{equation}
\label{g4}
\sum_{n=0}^{\infty}\gamma_n Q_n=Q_0\lambda_0 S_0.
\end{equation}

Thus, the authors of Refs. \cite{1,2} have two distinct expressions for $\sum^{\infty}\gamma_n Q_n$: Equation~(\ref{g4}) above and Equation (\ref{g5}),  obtained from the evolution equation for ${P}_0$ (\ref{1}). To reconcile these two expressions, it is claimed in Ref. \cite{1} that $S_0$ always takes the value $S_0=1$, for all sequences $\{r_n\}$.
We  show next that this is not the case. In fact, and although $S_0$ is indeed equal to one for the most common and simple type of sequences $\{\mu_n\}$, $\{\gamma_n\}$ considered in Refs. \cite{1,2}, it turns out that $S_0$ is strictly smaller than one for  more general sequences $\{r_n\}$. 



\begin{Lemma}\label{gb1}
\begin{equation}
\sum_{n=0}^N Z_n = 1 - Z_N/r_N.
\end{equation}
\end{Lemma}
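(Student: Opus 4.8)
The plan is to recognize the finite sum as telescoping once it is written in the right variables. The first thing I would do is isolate the partial product that governs both $Z_n$ and the right-hand side. Setting $\Pi_n := \prod_{k=0}^{n} \frac{1}{1+r_k}$, definition (\ref{3}) reads simply $Z_n = r_n \Pi_n$, and in particular $Z_N/r_N = \Pi_N$. Hence the asserted identity is equivalent to the cleaner statement $\sum_{n=0}^{N} Z_n = 1 - \Pi_N$, and it is this form that I would aim to establish.

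The central step is to show that each summand is a difference of consecutive partial products. Adopting the empty-product convention $\Pi_{-1} := 1$, I would verify the identity $Z_n = \Pi_{n-1} - \Pi_n$ for every $n \geq 0$. For $n \geq 1$ this follows from the recursion $\Pi_n = \Pi_{n-1}/(1+r_n)$, which gives $\Pi_{n-1} - \Pi_n = \Pi_{n-1}\, r_n/(1+r_n) = r_n \Pi_n = Z_n$; the case $n=0$ reduces to $1 - 1/(1+r_0) = r_0/(1+r_0) = Z_0$, consistent with the convention. Summing over $n$, the series collapses telescopically to $\Pi_{-1} - \Pi_N = 1 - \Pi_N$, which is precisely the reformulated claim.

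As an alternative I would note that the same identity is readily obtained by induction on $N$: the base case $N=0$ is the elementary check just mentioned, and the inductive step amounts to verifying that $Z_{N+1} - Z_N/r_N = -Z_{N+1}/r_{N+1}$, i.e., $Z_{N+1}(1+r_{N+1})/r_{N+1} = Z_N/r_N$, which is immediate from $Z_{N+1}/r_{N+1} = \Pi_{N+1} = \Pi_N/(1+r_{N+1})$ and $Z_N/r_N = \Pi_N$.

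There is no serious obstacle here; the only point requiring care is the treatment of the boundary term at $n=0$, which is why I would fix the empty-product convention $\Pi_{-1}=1$ at the outset so that the single formula $Z_n = \Pi_{n-1}-\Pi_n$ covers all $n \geq 0$ uniformly. The real content of the lemma is the reformulation $Z_N/r_N = \Pi_N$: once the tail of the sum is identified with a single partial product, the telescoping structure makes the result transparent and, incidentally, shows that $S_0 = \lim_{N\to\infty}(1 - \Pi_N)$ equals one precisely when $\Pi_N \to 0$, which is the dichotomy the paper is heading toward.
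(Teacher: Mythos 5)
Your proof is correct and follows essentially the same route as the paper, which proves the lemma ``directly from (\ref{3}), by working out the expression for $\sum^N Z_n - 1$''; your telescoping identity $Z_n = \Pi_{n-1}-\Pi_n$ is exactly that computation made explicit. The only difference is presentational: you spell out the partial-product notation and the empty-product convention that the paper leaves implicit.
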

\begin{proof}
Follows from the sum of Equation (\ref{2}), or alternatively, directly from (\ref{3}), by working out 
the expression for $\sum^N Z_n - 1$. $\square$
\end{proof}

\begin{Proposition}\label{s}
The infinite sum $S_0$ exists for all (positive) sequences $\{r_n\}$, with 
$S_0\leq 1$. The equality $S_0= 1$
 is achieved if and only if the sum  $\sum_{n=0}^N \ln(1+r_n)$ diverges, when $N\to\infty$.
\end{Proposition}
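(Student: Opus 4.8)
The plan is to let $N\to\infty$ in Lemma \ref{gb1}. First I would note that, by the definition (\ref{3}) of $Z_n$, the quantity subtracted in that lemma is exactly the partial product
\begin{equation*}
\frac{Z_N}{r_N}=\prod_{k=0}^{N}\frac{1}{1+r_k}=:\Pi_N,
\end{equation*}
so that Lemma \ref{gb1} reads $\sum_{n=0}^{N}Z_n=1-\Pi_N$. Since $r_k>0$, every factor $(1+r_k)^{-1}$ lies strictly between $0$ and $1$, so the sequence $\{\Pi_N\}$ is strictly positive and strictly decreasing, hence bounded below by $0$ and convergent to some limit $\Pi_\infty\in[0,1)$. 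This at once gives existence of $S_0$: the partial sums $\sum_{n=0}^{N}Z_n=1-\Pi_N$ converge, with $S_0=1-\Pi_\infty$. Because $\Pi_\infty\geq 0$, we obtain $S_0\leq 1$, and the inequality is an equality precisely when $\Pi_\infty=0$.

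It then remains to characterize the vanishing of $\Pi_\infty$. I would take logarithms, writing $\ln\Pi_N=-\sum_{k=0}^{N}\ln(1+r_k)$; since each summand $\ln(1+r_k)$ is positive (again because $r_k>0$), these partial sums increase monotonically to a limit in $[0,+\infty]$. Consequently $\Pi_\infty=0$ if and only if $\ln\Pi_N\to-\infty$, that is, if and only if $\sum_{n=0}^{N}\ln(1+r_n)$ diverges as $N\to\infty$. Combined with the previous paragraph, this gives exactly the stated equivalence $S_0=1\iff\sum_{n=0}^{N}\ln(1+r_n)\to\infty$.

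The argument is essentially telescoping (already delivered by Lemma \ref{gb1}) followed by monotone convergence, so I do not anticipate a genuine obstacle. The only points needing a little care are the strict positivity of the factors $(1+r_k)^{-1}$, which guarantees that $\Pi_N$ is monotone and cannot collapse for a trivial reason, and the standard equivalence between divergence of the product $\prod_k(1+r_k)$ and divergence of the series $\sum_k\ln(1+r_k)$. Both follow immediately from $r_k>0$ and the monotonicity of the logarithm, so the substance of the proof is simply the clean passage to the limit in Lemma \ref{gb1}.
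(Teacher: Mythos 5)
Your proof is correct and follows essentially the same route as the paper's own argument: passing to the limit in Lemma \ref{gb1} after identifying $Z_N/r_N$ with the partial product $\prod_{k=0}^{N}(1+r_k)^{-1}$, invoking monotone convergence of that decreasing positive sequence, and converting the vanishing of the limit into divergence of $\sum_n \ln(1+r_n)$ via logarithms. Your write-up is merely more explicit about the monotonicity and positivity details that the paper leaves to the reader; there is no gap.
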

\begin{proof}
From Lemma \ref{gb1}, it follows that:
\begin{equation}
\label{so}
S_0 = 1 - \lim_{N\to\infty} \prod_{n=0}^{N}\frac{1}{1+r_n}.
\end{equation}

The latter limit always exists, since the sequence is clearly decreasing and bounded from below.
On the other hand, applying logarithm to the product, it follows that the limit of $\prod^{N}\frac{1}{1+r_n}$ is zero if and only if the sum $\sum^N \ln(1+r_n)$ diverges. $\square$
\end{proof}

{Note that $S_0 = 1$
is ensured for all sequences $\{r_n\}$ that do not tend to zero, and also for  sequences $\{r_n\}$ that behave like 
$1/n$, for large $n$. However, $S_0$ fails to be equal to one for  sequences $\{r_n\}$ that behave, for  large $n$,  as 
$r_n\sim n^{-1-\epsilon}$, with $\epsilon >0$, since the infinite sum $\sum^{\infty} \ln(1+r_n)$ behaves in that case 
(apart from some finite term)  as 
$\sum^{\infty} n^{-1-\epsilon}$, which converges.  Thus, $S_0 = 1$ is indeed satisfied for the simplest cases considered in Refs.~\cite{1,2,3,4,5,6}}, and in particular for constant  sequences $\{\gamma_n\}$, with $\{\mu_n\}$ being also constant or linearly growing with $n$, but it fails even for  constant $\{\gamma_n\}$ sequences if 
$\{\mu_n\}$ grows like $n^{1+\epsilon}$.

If $S_0$ is not equal to one, then the Biró--Néda model simply does not admit a nontrivial stationary solution. In fact, the only solution to the joint Equations (\ref{g5}) and (\ref{g4}) is in that case $Q_0=0$, since $\lambda_0\not=0$. This of course implies $Q_n=0$, for all $n$. 

The source of the mismatch between expressions  (\ref{g5}) and (\ref{g4}) present in Refs. \cite{1,2} becomes clear when considering the derivative of the total probability. 
Let us then apply the time derivative to the infinite sum $\sum^{\infty}P_n$, taking into account 
(\ref{0}) (and assuming that the derivative and limit of the sum commute). We  obtain: 
\begin{equation}
\label{sum1}
\frac{d}{dt}\sum_{n=0}^{\infty}P_n=\dot{P}_0+\lim_{N\to\infty}\sum_{n=1}^{N}\dot{P}_n=\dot{P}_0 -\lim_{N\to\infty}\sum_{n=1}^{N}\lambda_n P_n + \lim_{N\to\infty}\sum_{n=1}^{N}\mu_{n-1} 
P_{n-1}. 
\end{equation}

A rearrangement of the finite sums then gives:
\begin{equation}
\label{sum2}
\frac{d}{dt}\sum_{n=0}^{\infty}P_n=\dot{P}_0 -\lim_{N\to\infty}\sum_{n=0}^{N}\gamma_n P_n + (\mu_0 + \gamma_0) P_0 -
\lim_{N\to\infty}\mu_{N}P_N.
\end{equation}

Typically, one would like to neglect the last term on the right-hand side, and the requirement of probability conservation would lead to Equation (\ref{1}). However, this is incompatible with the possibility of a nontrivial stationary solution, if
$S_0\not =1$. In fact, for a distribution $\{Q_n\}$ generated from $Q_0$ by means of the system 
(\ref{2}), it follows from  (\ref{g3}) and (\ref{so}) precisely that $\lim_{N\to\infty}\mu_{N}Q_N=
\lambda_0Q_0(1-S_0)$, and thus $\lim_{N\to\infty}\mu_{N}Q_N$ is zero for a nontrivial stationary solution if and only if $S_0=1$.
%

Equation (\ref{1}) therefore does not admit (nontrivial) stationary solutions when $S_0\not =1$. In order to include stationary solutions in the model, a modified version of Equation (\ref{1}) is required, which we discuss in the next section.

To conclude the present section, let us remark that, regardless of the differential equation for $P_0$, the (nontrivial) stationary solution 
to be defined by (\ref{g3}) is not necessarily normalizable for arbitrary sequences $\{\gamma_n\}$, $\{\mu_n\}$. In fact, taking $\gamma_n=\mu_n=e^{-n}$ leads to $Q_n\propto (e/2)^n$, which is clearly not summable. 
On the other hand, considering in particular situations with $S_0\not =1$, normalizability is definitely ensured for the cases of fast growth  rates described  e.g., in 
Refs. \cite{1,2}.  In fact, it follows from $S_0\not =1$ that $Q_n$ behaves asymptoticaly as $1/\mu_n$, and  is therefore normalizable for all sequences $\mu_n$ growing faster than $n$

\section{Dynamics and Probability Conservation}
\label{form}
We will now  obtain and discuss the generalization of Equation (\ref{1}) to the cases where $S_0\not=1$, allowing for nontrivial stationary solutions.

\subsection{General Case}
\label{general}
The simple  way out of the conflict between Equations (\ref{g5}) and (\ref{g4}) is to  include  the boundary term $\lim_{n\to\infty}\mu_{n}Q_n= Q_0\lambda_0(1-S_0)$ as an extra nonhomogeneous contribution  in Equation (\ref{1}), or alternatively to redefine the parameters 
$\gamma_n$ appearing in (\ref{1}) (and only those, Equations (\ref{0}) are left unchanged). 
Let us start with a free parameter $R>0$, define: 
\begin{equation}\label{r}
\tilde{\gamma}_n=\gamma_n+R\lambda_0(1-S_0), 
\end{equation}
and adopt the following equation for 
$\dot{P}_0$:
\begin{equation}
\label{sum17I}
\dot{P}_0 =\sum_{n=0}^{\infty}\tilde{\gamma}_n P_n - \lambda_0  P_0.
\end{equation}

For stationary solutions we now obtain from (\ref{sum17I}):
\begin{equation}
\label{sum17Is}
 \sum_{n=0}^{\infty}\gamma_n Q_n = \lambda_0  Q_0 - R\lambda_0(1-S_0)\sum_{n=0}^{\infty} Q_n,
\end{equation}
and a solution compatible with (\ref{g4}) emerges, with $Q_0=R$ and $\sum_{n=0}^{\infty} Q_n =1$.
The value of $R$, and therefore of $Q_0$, is thus fixed by  normalization.

There is however a price to be paid,  in terms of probability conservation, for including stationary solutions in the model, with $S_0\not =1$, precisely due to the nonzero boundary term. To see this, let us go back to the general Equation (\ref{sum2}) and introduce Equation (\ref{sum17I}), to~obtain:
\begin{equation}
\label{sum8g}
\frac{d}{dt}\sum_{n=0}^{\infty}P_n= Q_0\lambda_0(1-S_0) \sum_{n=0}^{\infty}P_n-
\lim_{N\to\infty}\mu_{N}P_N.
\end{equation}

(Alternatively, adding the boundary term simply as a nonhomogeneous contribution in (\ref{1}) would have the only effect of removing the sum $\sum^{\infty}P_n$ from the right-hand side.)

Equation
(\ref{sum8g}) indicates that probability can only be conserved for initial probability distributions
$\{P_n(0)\}$ such that $\lim_{n\to\infty}\mu_{n}P_n(0)=Q_0\lambda_0(1-S_0)$. When the boundary term contribution is absent, the  typical initial distributions, such that $P_n(0)=0$ for all $n$ greater than a certain  $n_0$, would ensure a null right-hand side in (\ref{sum8g}). This is no longer the case  if
$S_0\not =1$. According to  (\ref{sum8g}), probability is only conserved  close to stationarity, e.g., for distributions with the same asymptotic behavior as the stationary solution. The usefulness of this model therefore seems to be restricted in  cases where $S_0\not =1$, since the  states with $P_n(0)=0$ for all 
$n>n_0$ are  typical initial states.

\subsection{Constant Reset Rate}
\label{form1}
As preparation for the next section, and also for its own intrinsic interest, we now consider the special case of constant $\{\gamma_n\}$ sequences, i.e., with $\gamma_n=\gamma$, for all $n$. 

Considering Equation (\ref{sum17I}), it is certainly tempting to simply impose  $\sum^{\infty} P_n=1$ for all $t$ and leave the remaining unchanged. However, since probability conservation is itself an issue 
(for  $S_0\not =1$),  it is safer to start afresh with a nonhomogeneous equation for $\dot{P}_0$ of the form:
\begin{equation}
\label{n1}
\dot{P}_0 =\gamma - \lambda_0  P_0 + K,
\end{equation}
where $K$ is some constant, and see where it leads. Equation (\ref{n1}) now gives $Q_0$ in terms of $K$, and compatibility with  (\ref{g4}) and normalization of 
$\{Q_n\}$ fixes $K$ as $K=\gamma/S_0 -\gamma$. 
We therefore  obtain:
\begin{equation}
\label{sum7}
\dot{P}_0 = - \lambda_0  P_0 + \frac{\gamma}{S_0},
\end{equation}
whereas (\ref{sum8g}) is replaced with:
\begin{equation}
\label{sum8}
\frac{d}{dt}\sum_{n=0}^{\infty}P_n= \gamma\left(1-\sum_{n=0}^{\infty}P_n\right) + Q_0\lambda_0(1-S_0)-
\lim_{N\to\infty}\mu_{N}P_N.
\end{equation}

Thus, with  $\sum^{\infty} P_n=1$, both Equations (\ref{sum7}) and (\ref{sum8}) coincide with those obtained from 
(\ref{sum17I})--(\ref{sum8g}).

It is worthwhile mentioning the following. As we have seen, for $S_0\not=1$, Equation~(\ref{1}) is fully incompatible with (nontrivial) stationary solutions. This is not the case with its reduced counterpart, e.g., an equation of the form  
$\dot{P}_0 = - \lambda_0  P_0 + {\gamma}$, since (\ref{g5}) would be replaced by $\gamma = Q_0\lambda_0$, which is of a different nature, and not fully incompatible with~(\ref{g4}). In fact, a stationary solution exists, with 
$\sum_{n=0}^{\infty} Q_n=S_0$. However, this does not mean that an unnormalized stationary solution for   Equation 
(\ref{1}) exists, as it does not. Note also that the reduction from Equation (\ref{1}) to its reduced counterpart would be inconsistent if $S_0\not =1$, since it would be done assuming $\sum_{n=0}^{\infty} P_n=1$, whereas the reduced equation gives $\sum_{n=0}^{\infty} Q_n =S_0$.

Returning to Equation (\ref{sum8}), the situation regarding probability conservation remains as discussed in Section 
\ref{general} above. Examples leading to $S_0\not=1$ are 
$\mu_n=n^s+1$, with $s>1$. In this case, 
any initial probability distribution decaying faster than $1/n^s$ fails to produce a null right-hand side   in 
(\ref{sum8}).   For bounded or linearly growing sequences $\{\mu_n\}$, on the other hand,  the model formed by system (\ref{0}), with $\gamma_n=\gamma$, and Equation (\ref{sum7}) (with $S_0=1$) is fully consistent to describe the time evolution of a probability 
distribution $\{P_n\}$.

\section{Behavior of the Transient Solution}
\label{trans}
In this section, we depart from foundational and formulation issues, and  address the behavior of the time-dependent solutions of the Biró--Néda model for two special cases. In particular, we consider sequences 
$\{\gamma_n\}$ that are constant, and  sequences $\{\mu_n\}$ that are either constant or growing linearly with $n$. In both cases the value of $S_0$ (\ref{S0}) is equal to one, no boundary issues such as those discussed in Sections 
\ref{stat} and \ref{form} exist, and the model is  fully consistent, as we have just seen.

This issue was originally addressed in Ref. \cite{3}. In that work, the general solutions of the differential equations were written down, for the two considered cases, and an attempt was made to study the behavior of the transient, i.e., 
time-dependent, solution. Particular attention was paid to the  monotonicity, or lack thereof, of the time-dependent solution when approaching  the stationary solution. Although the general solutions presented in Ref.~\cite{3} are absolutely correct, the analysis of their behavior was restricted to a very special case, embodied by very particular initial conditions. We show here that the simple behavior of the solutions discussed in Ref.  \cite{3} is far from generic.

Let us then fix $\gamma_n=\gamma$, for all $n\geq 0$, and let  $\{P_n\}$ be a sequence of variables satisfying Equations (\ref{0}) and (\ref{sum7}) with $S_0=1$, for sequences $\{\mu_n\}$ that grow, at most, linearly with $n$. 

Let us write, for arbitrary $n$:
\begin{equation}
\label{t0}
P_n(t)=\Delta_n(t)+Q_n.
\end{equation}

Since, for each fixed $n$, the set $\{Q_0,\cdots,Q_n\}$ is a solution of the nonhomogeneous system of differential equations determined by (\ref{0}) and (\ref{sum7}) for the variables $\{P_0,\cdots,P_n\}$, it follows from general arguments that
the variables $\Delta_n$ are solutions of the corresponding homogeneous system, i.e., 
\begin{eqnarray}
\label{1z}
\dot\Delta_0=-\lambda_0 \Delta_0,  \nonumber \\ 
\dot\Delta_n=-\lambda_n \Delta_n + \mu_{n-1}\Delta_{n-1}, \ n\geq 1.
\end{eqnarray}


Given that all  values $\lambda_n$, $n\geq 0$, are positive, the functions $\Delta_n$ tend to zero as time tends to infinity, with $P_n$ approaching its stationary value $Q_n$. The detailed behavior of the functions $\Delta_n$ was analyzed in Ref. \cite{3} only for the particular initial conditions $P_0(0)=1$, $P_n(0)=0$, for all $n> 0$. The authors observed only one stationary point for each function $\Delta_n$ and concluded (analytically in the case of constant $\mu_n$ and based on numerical evidence in the other case) that the functions $\Delta_n$ are monotonic after a critical time $t_c$ that depends on $n$. We show next that the behavior  of the functions $\Delta_n$ is much more involved, in general. In fact, depending on the initial conditions, each function $\Delta_n$ can have up to $n$ stationary points, 
with the position of the last one being arbitrary. Thus, while it is true  that all functions  $\Delta_n$ are eventually monotonic, the time after which monotonicity is achieved can be arbitrarily large, for any given $n$.

\subsection{Constant Growth Rate}
\label{cte mu}
Let us then fix 
$\mu_n=\mu$ for all $n\geq 0$, besides $\gamma_n=\gamma$. As shown in Ref. \cite{3}, the solutions $\Delta_n$ to the homogeneous system determined by (\ref{1z}) can be written in terms of the initial values $\Delta_j(0)$, $j=0,\cdots, n$ in the form:
\begin{equation}
\label{m1}
\Delta_n(t)=e^{-\lambda t}f_n(t),\qquad \lambda=\gamma+\mu,
\end{equation}
where $f_n$ is a degree-$n$ polynomial:
\begin{equation}
\label{m2}
f_n(t)=\sum_{k=0}^{n} \Delta_{n-k}(0) \frac{(\mu t)^k}{k!}.
\end{equation}
\label{m3}

The derivative of $\Delta_n$ is easily obtained:
\begin{equation}
\dot{\Delta}_n=e^{-\lambda t}(\dot{f}_n-\lambda f_n).
\end{equation}

Thus, the stationary points of $\Delta_n$ coincide with the positive roots of the function $\dot{f}_n-\lambda f_n$, which is again
a polynomial of degree $n$. In particular:
\begin{equation}
\label{m2z}
\dot{f}_n-\lambda f_n=\sum_{k=0}^{n-1} \bigl[\mu\Delta_{n-k-1}(0)-\lambda\Delta_{n-k}(0) \bigr]\frac{x^k}{k!} -
\lambda\Delta_{0}(0)\frac{x^n}{n!}, \qquad x=\mu t.
\end{equation}

Thus, each $\Delta_n$ has at most $n$ stationary points, and  is therefore monotonic after some point.
However, for any given pair $\mu$, $\lambda$, one can find initial conditions such that, 
for any given $n$, the polynomial
$\dot{f}_n-\lambda f_n$ is fairly arbitrary, concerning both the number of positive roots and the position of its largest root. In fact, the following applies:
\begin{Proposition}\label{cm}
Let $\mu$, $\lambda$, and $n$ be given, and let $t_n$ denote the largest (positive) root of  $\dot{f}_n-\lambda f_n$.
Then, for any $M>1$, one can find initial values $\Delta_k(0)$, $\forall k\geq 0$, such that $\dot{f}_n-\lambda f_n$ has $n$ positive roots and $t_n>M$.
\end{Proposition}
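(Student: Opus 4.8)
The plan is to exploit the observation that, for fixed $n$, the map sending the initial data $(\Delta_0(0),\dots,\Delta_n(0))$ to the polynomial $\dot f_n-\lambda f_n$ is a linear bijection. Once this is established, the polynomial may be prescribed arbitrarily, and it suffices to write down a target polynomial exhibiting the desired root structure and then read off the initial conditions that produce it.

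First I would read off from (\ref{m2z}) the coefficients of $\dot f_n-\lambda f_n$ relative to the basis $\{x^k/k!\}_{k=0}^{n}$, namely $c_n=-\lambda\Delta_0(0)$ together with $c_k=\mu\Delta_{n-1-k}(0)-\lambda\Delta_{n-k}(0)$ for $0\le k\le n-1$. Regarded as a map $(\Delta_0(0),\dots,\Delta_n(0))\mapsto(c_n,c_{n-1},\dots,c_0)$, this is a lower-bidiagonal linear system with diagonal entry $-\lambda$ and subdiagonal entry $\mu$, whose determinant is $(-\lambda)^{n+1}\neq0$ since $\lambda=\gamma+\mu>0$. Hence it is invertible, and indeed one may invert it explicitly by the recursion $\Delta_0(0)=-c_n/\lambda$ and $\Delta_{j+1}(0)=\bigl(\mu\Delta_j(0)-c_{n-1-j}\bigr)/\lambda$ for $0\le j\le n-1$. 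Thus every polynomial of degree at most $n$ arises as $\dot f_n-\lambda f_n$ for a unique choice of the relevant initial values, degree exactly $n$ corresponding to $\Delta_0(0)\neq0$.

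With this reduction in hand I would simply choose the target. Fix $n$ distinct reals $\mu M<r_1<\cdots<r_n$ and set $g(x)=\prod_{j=1}^{n}(x-r_j)$, a degree-$n$ polynomial with $n$ distinct positive roots whose largest is $r_n>\mu M$. Expanding $g$ in the basis $\{x^k/k!\}$ to extract its coefficients and feeding these into the recursion above yields initial values $\Delta_0(0),\dots,\Delta_n(0)$ (the remaining $\Delta_k(0)$ with $k>n$ being irrelevant to $f_n$, so free to set to zero) for which $\dot f_n-\lambda f_n=g$. Since $x=\mu t$, the stationary points of $\Delta_n$ occur at the times $t=r_j/\mu$; hence $\Delta_n$ has exactly $n$ positive stationary points, the largest being $t_n=r_n/\mu>M$, as required.

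The construction is essentially free of obstacles once the bijection is identified; the only matters needing care are bookkeeping. One must respect the factorial normalization $x^k/k!$ when converting between the monic product form of $g$ and the coefficients entering the recursion, and must check that $g$ has degree exactly $n$ (equivalently $\Delta_0(0)\neq0$), which is automatic as $g$ is monic of degree $n$. Finally, I would note that the statement imposes no positivity or normalization constraint on the $\Delta_k(0)$—they are deviations from the stationary values and may have either sign—so the freedom to prescribe $g$ is genuine; should one additionally wish $\sum_k\Delta_k(0)=0$, the unused parameters $\Delta_k(0)$ with $k>n$ can be chosen to enforce it without affecting $\dot f_n-\lambda f_n$.
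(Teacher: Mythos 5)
Your core construction is the same as the paper's: treat $\dot f_n-\lambda f_n=p$ as a linear system for $(\Delta_0(0),\dots,\Delta_n(0))$, solve it, and prescribe a target polynomial with $n$ roots beyond $\mu M$. Your explicit inversion is actually sharper than the paper's, which merely asserts solvability; your observation that the system is lower-bidiagonal with determinant $(-\lambda)^{n+1}\neq 0$, together with the explicit recursion, is correct and a nice addition.

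However, your final paragraph contains a genuine gap. You assert that "the statement imposes no positivity or normalization constraint on the $\Delta_k(0)$" and dismiss the issue. In the paper's context this is wrong: the $\Delta_k$ are defined by $P_k=\Delta_k+Q_k$ where $\{P_k\}$ is a probability distribution evolving under the model, and the phrase "initial values $\Delta_k(0)$, $\forall k\geq 0$" in the statement signals precisely that the \emph{entire} sequence must be admissible, i.e., $-Q_k\leq\Delta_k(0)\leq 1-Q_k$ for all $k$ and $\sum_k\bigl(\Delta_k(0)+Q_k\bigr)=1$. Without this the proposition is vacuous as a claim about the model's transients, and the paper devotes roughly half its proof to it. Moreover your specific choice fails these constraints badly: taking $g$ monic of degree $n$ gives coefficient $n!$ relative to the basis $x^n/n!$, hence $\Delta_0(0)=-n!/\lambda$, so $P_0(0)=\Delta_0(0)+Q_0<0$ already for modest $n$. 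The repair is exactly the paper's $\epsilon$-scaling, and you have all the machinery for it: since your map is linear, replacing $g$ by $\epsilon g$ with $\epsilon>0$ small scales every $\Delta_k(0)$, $k\leq n$, by $\epsilon$ while leaving the roots $r_1,\dots,r_n$ untouched; the finitely many conditions $-Q_k\leq\Delta_k(0)\leq 1-Q_k$ and $\sum_{k=0}^n P_k(0)\leq 1$ are then satisfiable (the latter can only fail if $\sum_{k\leq n}\Delta_k(0)>0$, cured by shrinking $\epsilon$ further), after which one sets $P_{n+1}(0)=1-\sum_{k=0}^n P_k(0)$ and $P_k(0)=0$ for $k>n+1$. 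With that one extra step your argument coincides with the paper's and is complete; your suggestion of tuning the $\Delta_k(0)$ with $k>n$ only to enforce $\sum_k\Delta_k(0)=0$ does not suffice, because it leaves the pointwise bounds at $k\leq n$ unaddressed.
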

\begin{proof}
Let $p_n(x)=\sum_{k=0}^n a_k x^n$ denote an arbitrary degree-$n$ polynomial. The equation 
$\dot{f}_n-\lambda f_n=p_n$ gives us a system of linear equations for the set $\Delta_0(0),\cdots, \Delta_n(0)$,
in terms of the set $a_0,\cdots,a_n$, which can always be solved.
Now choose $p_n$ such that it possesses $n$ positive roots, which is certainly possible (choose any polynomial with $n$ real roots and apply a translation). If the largest root of $p_n$ is not greater than the prescribed value $\mu M$, replace 
$p_n(x)$ with $p_n(x/\Lambda)$, for appropriate $\Lambda >1$. The choice $\dot{f}_n-\lambda f_n=p_n$ thus gives us $n$ positive roots and an arbitrarily large value of $t_n$. To prove the proposition, it remains to show that the initial values $\Delta_k(0)$, $k\leq n$,  can be chosen in a way that is compatible with a probability  distribution, namely defined by 
$P_k=\Delta_k +Q_k$. It is thus necessary that:
\begin{equation}\label{norm}
-Q_k\leq\Delta_k(0)\leq 1-Q_k,
\end{equation}
and 
\begin{equation}\label{norm2}
\sum_{k=0}^n\bigl(\Delta_k(0)+Q_n\bigr)\leq 1,
\end{equation}
which follow from $0\leq P_k\leq 1$, $\forall k$, and $\sum_{k=0}^n P_k\leq 1$.
This can be simply achieved (if necessary) by replacing the latter polynomial with $\epsilon p_n(x)$,
 for sufficiently small $\epsilon >0$. The new polynomial has the same set of roots, and it is certainly possible to satisfy all the conditions (\ref{norm}) concerning $k\leq n$. Note again that the relation between the $a_k$'s and the $\Delta_k(0)$'s is linear, and therefore the scaling $p_n\to\epsilon p_n(x)$ results in a scaling $\Delta_k(0)\to \epsilon \Delta_k(0)$. All conditions (\ref{norm}) for $k\leq n$ can be satisfied for an appropriate choice of $\epsilon$, since the number of conditions is finite. As for condition (\ref{norm2}), it can only be violated if $\sum_{k=0}^n \Delta_k(0)>0$, 
since $\sum_{k=0}^n Q_k<1$ is satisfied by construction. In that case, one can just further reduce the value of 
$\epsilon$ above, until (\ref{norm2}) is satisfied. Clearly, this last action preserves conditions~(\ref{norm}), therefore we are done for $k\leq n$. For the remaining initial values one can simply set 
$P_{n+1}(0)=1 -\sum_{k=0}^n\bigl(\Delta_k(0)+Q_n\bigr)$ and $P_k=0$, $\forall k> n+1$. $\square$
\end{proof}

\subsection{Linear Growth Rate}
\label{dif gamma}
When all  values $\lambda_n$ in the sequence $\{\lambda_n=\gamma_n+\mu_n\}$ are different, the general solution for the functions $\Delta_n$ takes the form  \cite{3}:
\begin{equation}
\label{t1}
\Delta_n(t)=\sum_{k=0}^{n} C_k \alpha^n_k e^{-\lambda_k t},
\end{equation}
with
\begin{equation}
\label{t11}
 \alpha^n_k = \frac{\prod_{m=k}^{n-1}\mu_m}{\prod_{m=k+1}^n(\lambda_m-\lambda_k)}.
\end{equation}

The coefficients $C_k$ and the initial conditions $\Delta_k(0)$ are related by: 
\begin{equation}
\label{t12}
\Delta_n(0)=\sum_{k=0}^{n} C_k \alpha^n_k,
\end{equation}
or by the inverse relation:
\begin{equation}
\label{t2}
C_k=\sum_{j=0}^{k} \prod_{i=j}^{k-1} 
\frac{\mu_i}{\lambda_i-\lambda_k}\Delta_j(0).
\end{equation}

The above expressions take a simpler form in the  particular case considered in Ref.~\cite{3}.
Let then:
\begin{equation}
\label{t3}
\lambda_n=\gamma+\sigma(n+1), \quad \sigma> 0, \quad\forall n\geq 0.
\end{equation}

It follows from  (\ref{t11}) that the coefficients $\alpha_k^n$ are just the binomial coefficients, i.e.,
\begin{equation}
\label{t13}
\alpha_k^n=\begin{pmatrix} 
n \\
 k \\
\end{pmatrix}.
\end{equation}

In addition, expression (\ref{t1}) simplifies to:
\begin{equation}
\label{t4}
\Delta_n(t)=e^{-(\gamma+\sigma)t}\sum_{k=0}^{n} C_k 
\alpha_k^n
(e^{-\sigma t})^k,
\end{equation}
with $\alpha_k^n$ given by (\ref{t13}).
Relation (\ref{t2}) in turn becomes: 
\begin{equation}
\label{t5}
C_k= 
\sum_{j=0}^{k} (-1)^{k-j} 
\begin{pmatrix} 
k \\
 j \\
\end{pmatrix}\Delta_j(0).
\end{equation}

Therefore, again $\Delta_n(t)$ is of the form:
\begin{equation}
\label{t6}
\Delta_n= e^{-\lambda t}g_n,
\end{equation}
with $\lambda=\mu+\gamma$,  where $g_n$ is a polynomial, this time in the variable $y=e^{-\sigma t}$, i.e.,
\begin{equation}
\label{t}
g_n=\sum_{k=0}^{n} C_{k}\alpha_k^n y^k, \qquad y=e^{-\sigma t}.
\end{equation}
\label{t7}

The derivative is:
\begin{equation}
\dot{\Delta}_n=e^{-\lambda t}(\dot{g}_n-\lambda g_n),
\end{equation}
and again the stationary points of $\Delta_n$ correspond to the positive roots of the polynomial 
$\dot{g}_n-\lambda g_n$, which now takes the form:
\begin{equation}
\label{t8}
\dot{g}_n-\lambda g_n=-\sum_{k=0}^{n} C_{k}\alpha_k^n (\lambda+k \sigma) y^k.
\end{equation}

Obviously, the stationary points $t$ of $\Delta_n$ are related to the roots $y$ of $\dot{g}_n-\lambda g_n$ by 
$t=-\frac{1}{\sigma}\ln(y)$.

The following analogue of proposition (\ref{cm}) can be proven, using essentially the same arguments, with the appropriate adaptations.
\begin{Proposition}\label{cm2}
Let $\gamma$, $\sigma$, and $n$ be given, and let $t_n$ denote the largest (positive) stationary point of  $\Delta_n$ 
(\ref{t4}).
Then, for any $M>1$, one can find initial values $\Delta_k(0)$, $\forall k\geq 0$, such that $\Delta_n$ has $n$ positive stationary points and $t_n>M$.
\end{Proposition}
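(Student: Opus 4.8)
The plan is to follow the proof of Proposition~\ref{cm} almost verbatim, the only genuine change arising from the substitution $y=e^{-\sigma t}$. As in the constant-growth case, the stationary points of $\Delta_n$ are the positive roots (in $t$) of the degree-$n$ polynomial $\dot g_n-\lambda g_n$ of (\ref{t8}), now regarded as a polynomial in $y$. The first step is to check that the assignment from the initial data $\{\Delta_0(0),\dots,\Delta_n(0)\}$ to the coefficients of $\dot g_n-\lambda g_n$ is linear and invertible: the relation (\ref{t5}) between the $\Delta_j(0)$ and the $C_k$ is lower-triangular with unit diagonal, hence invertible, while the passage from $C_k$ to the coefficient of $y^k$ in (\ref{t8}) is multiplication by the nonzero factor $-\alpha_k^n(\lambda+k\sigma)$. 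Thus $\dot g_n-\lambda g_n$ may be set equal to any prescribed degree-$n$ polynomial $q_n(y)$, and the corresponding $\Delta_k(0)$ are then uniquely determined.

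The essential new point is the bookkeeping forced by $y=e^{-\sigma t}$. Since $t=-\tfrac{1}{\sigma}\ln y$ is a decreasing bijection from $(0,1)$ onto $(0,\infty)$, a positive stationary point corresponds to a root $y\in(0,1)$, and the largest stationary point $t_n$ corresponds to the \emph{smallest} such root $y_{\min}$. To realize $n$ positive stationary points with $t_n>M$ it therefore suffices to prescribe $q_n$ with $n$ roots all lying in $(0,1)$, the smallest of them below $e^{-\sigma M}$; for instance one may take $q_n(y)=\prod_{k=1}^n(y-y_k)$ with $0<y_1<\dots<y_n<1$ and $y_1<e^{-\sigma M}$. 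This replaces the outward rescaling $p_n(x)\mapsto p_n(x/\Lambda)$ used in Proposition~\ref{cm}: there the roots could be moved freely along the positive axis, whereas here they must be confined to the bounded interval $(0,1)$ with one of them driven toward the origin.

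With the target polynomial fixed, the remainder is identical to the normalization argument of Proposition~\ref{cm}. Solving the linear system yields candidate values $\Delta_k(0)$, $k\le n$, and replacing $q_n$ by $\epsilon\,q_n$ for small $\epsilon>0$ rescales $\Delta_k(0)\mapsto\epsilon\,\Delta_k(0)$ without moving any root $y$, hence without altering the stationary points of $\Delta_n$. A sufficiently small $\epsilon$ satisfies the finitely many constraints (\ref{norm}) for $k\le n$, a further reduction enforces (\ref{norm2}) if needed, and the remaining data are fixed by $P_{n+1}(0)=1-\sum_{k=0}^n\bigl(\Delta_k(0)+Q_k\bigr)$ together with $P_k(0)=0$ for $k>n+1$, exactly as before.

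The step I expect to require the most care — and hence the main obstacle — is confirming that confining all $n$ roots to $(0,1)$ while driving $y_{\min}$ arbitrarily close to $0$ stays compatible with the probability constraints, a restriction absent in the constant-$\mu$ case. Because the map to the $\Delta_k(0)$ is linear and the final $\epsilon$-scaling preserves roots, this compatibility is automatic; the subtlety is purely in verifying that the interval restriction on $y$ does not obstruct the prescription of $q_n$, which it does not, since any configuration of $n$ points in $(0,1)$ is admissible.
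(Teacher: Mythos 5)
Your proof is correct and follows essentially the same route as the paper's: prescribe the target polynomial $\dot g_n-\lambda g_n$ with $n$ roots in $(0,1)$ whose smallest root lies below $e^{-\sigma M}$, invert the linear map back to the $\Delta_k(0)$, and reuse the $\epsilon$-scaling normalization argument of Proposition~\ref{cm}. Your explicit verification of invertibility (the unit-diagonal triangular structure of (\ref{t5}) composed with the nonzero diagonal factors $-\alpha_k^n(\lambda+k\sigma)$) is a point the paper merely asserts, but the substance of the argument is the same.
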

\begin{proof}
Let $p_n(y)=\sum_{k=0}^n a_k y^n$ denote an arbitrary degree-$n$ polynomial. Again, the equation 
$\dot{g}_n-\lambda g_n=p_n$ gives us a system of linear equations for the set $\Delta_0(0),\cdots, \Delta_n(0)$,
in terms of the set $a_0,\cdots,a_n$, which can always be solved, taking into account (\ref{t12}).
Now choose $p_n$ such that it possesses $n$  roots in the interval $]0,1]$. This is certainly possible, since one can pick any polynomial $p_n(y)$ with $n$ positive roots and replace it by
$p_n(y/\Lambda)$, for appropriate $\Lambda < 1$. Let $y_0$ be the smallest root of $p_n$. If $t_n=-\frac{1}{\sigma}\ln(y_0)$ is not greater than the prescribed value $M$, then simply adjust the value of $\Lambda$ above. The choice 
$\dot{g}_n-\lambda g_n=p_n$ thus gives us $n$-positive stationary points and an arbitrarily large value of $t_n$. To conclude the proof, it remains to show that the initial values $\Delta_k(0)$ can be chosen such that $P_k=\Delta_k(0)+Q_k$
defines a normalized probability distribution, 
which can be done by exactly the same arguments as in proposition (\ref{cm}). $\square$
\end{proof}

\section{Conclusions and Discussion}
\label{dis}


We have analyzed the  Biró--Néda model \cite{1,2}, parametrized by the reset rate  $\gamma_n$ and growth rate $\mu_n$. We have shown that, contrary to expectations,   the model does not actually admit (nontrivial) stationary solutions,  if the sequence of values $\mu_n/\gamma_n$ grows faster than $n$. This follows from the fact that  the sum $S_0$ discussed in Ref. \cite{1} fails to be equal to one in such cases. Interesting time-independent distributions are nevertheless associated with the sequences $\mu_n$ and $\gamma_n$, however they simply do not appear as solutions of the model. This is related to the fact that those distributions 
$Q_n$ are such that 
$\lim_{n\to\infty} \mu_n Q_n$ is nonzero, introducing an unbalanced boundary term at infinity.

Keeping the spirit of the Biró--Néda model, and trying in particular to remain within the setting of unidirectional processes, we have proposed a modified version of the dynamical equation for $\dot{P}_0$ 
(\ref{1}), allowing for those distributions to emerge as true stationary solutions. This effort is partly successful, in the sense that the solutions are included in the model.  However, probability is only conserved close to stationarity. This consequence looks unavoidable,   since the lack of backwards interactions seems to leave no other option  but to include the solutions essentially by introducing a nonhomogeneous term   in the evolution of $P_0$, which is consequently reflected as a boundary term in the equation for the conservation of probability. This is to be contrasted e.g., with birth and death processes, where exactly the same stationary solutions can be obtained, with no boundary issues and with unrestricted probability conservation.

It is perhaps not surprising that technical challenges are arising in the Biró--Néda model for fast growth.
In fact, at least for constant $\gamma_n$, the large $n$ behavior is expected to be dominated by the $\mu_n$ terms in Equation (\ref{0}). This is essentially a pure birth process, and those are known to exhibit exotic behavior for a fast growth rate, e.g.,  {\em explosions} (see e.g., \cite{K,A}) and probability loss \cite{F}, appearing under essentially the same mathematical conditions found in 
Proposition \ref{s}.

Concerning other technical points,   one needs to ensure that $\{Q_n\}$ is normalizable (which we have also shown is not 
{a priori} guaranteed for all sequences $\gamma_n$ and $\mu_n$) and that the infinite sum 
$\sum^{\infty}\gamma_n P_n$ appearing in the differential equation for $P_0$  makes sense. In fact,  the coupling of 
$P_0$ with an infinite number of variables $P_n$   seems technically challenging, and a fast decaying  sequence 
$\gamma_n$ would be preferable for a more amenable formulation. This however conflicts with the normalizability of $\{Q_n\}$.
Requiring $\gamma_n\in ]a,b[$ for all $n$, for some strictly positive $a$ and $a<b<\infty$,  might be sufficient to address both questions, however a more detailed study is probably required.

Finally, in Section \ref{trans}, we  focused on the time-dependent behavior of the functions 
$\Delta_n=P_n-Q_n$, in the amenable cases of a constant reset rate and constant or linear growth rate.  Our analysis considerably expands the discussion of the transient behavior presented in Section 6 of Ref. \cite{3}. The behavior seen in Ref. \cite{3} is not generic, and in fact comes from the very particular initial conditions $P_n(0)=0$, for all 
$n>0$,  which, taking into account  Equation (\ref{0}), effectively compel all but one of the stationary points of $P_n$ to be located at $t=0$.
In general, we were able to show that $\Delta_n$, and therefore $P_n$, can possess $n$ stationary points, the last one occurring at an arbitrarily long time. Thus,  monotonic convergence can occur, for general initial conditions, only after an arbitrarily long time.

\vspace{6pt}

\acknowledgments J.V. is grateful for the support given by the research unit Fiber Materials and Environmental Technologies (FibEnTech-UBI), on the extent of the project reference UIDB/00195/2020, funded by the Fundação para a Ciência e a Tecnologia (FCT). I.I. was partially supported by Fundação para a Ciência e Tecnologia through the project UID/MAT/00212/2019.



\begin{thebibliography}{999.}

\bibitem{1} Biró, T.S.; Néda, Z. Dynamical stationarity as a result of sustained random growth. \emph{Phys. Rev. E} 
{\bf 2017}, \emph{95}, 032130.

\bibitem{2} Biró, T.S.; Néda, Z. Unidirectional random growth with resetting. \emph{Phys. Stat. Mech. Appl.} {\bf 2018}, \emph{499}, 335--361. 


\bibitem{face} Néda, Z.; Varga, L.; Biró, T.S. Science and Facebook: The same popularity law! 
\emph{PLoS ONE} {\bf 2017},  \emph{12}, e0179656. 

\bibitem{8}  Biró, T.S.; Néda, Z.; Telcs, A. Entropic Divergence and Entropy Related to Nonlinear Master Equations. \emph{Entropy} {\bf 2019}, \emph{21}, 993.



\bibitem{4} Néda, Z; Gere, I.;  Biró, T.S.; Tóth, G.; Derzsy, N.
Scaling in income inequalities and its dynamical origin.
\emph{Phys. Stat. Mech. Appl.} {\bf 2020},  \emph{549}, 124491. 

\bibitem{5} Gere, I.; Kelemen, S.;  Toth, G.; Biró, T.S.;  Neda, Z. Wealth distribution in modern societies: collected data and a master equation approach.
\emph{Phys. Stat. Mech. Appl.} {\bf 2021}, \emph{581}, 126194.

\bibitem{6}   Gere, I.;  Kelemen, S.;  Biró, T.S.;  Néda, Z. Wealth distribution in villages. Transition from socialism to capitalism in view of exhaustive wealth data and a master equation approach. \emph{Front. Phys.} {\bf 2022}, \emph{10}, DOI:10.3389/fphy.2022.827143.


\bibitem{w1} Park, J.; Park, Y.  Wealth Distribution for the Spin Agent Model of the Stock Market.  \emph{NPSM}  {\bf 2020}, \emph{70}, 292--298.  
 
\bibitem{w2} Cardoso, B.-H.;  Gonçalves, S.;  Iglesias, J.
Wealth distribution models with regulations: Dynamics and equilibria.
\emph{Phys. Stat. Mech. Appl.} {\bf 2020}, \emph{551},
124201.


\bibitem{w3} Cui, L.;  Lin, C.
A simple and efficient kinetic model for wealth distribution with saving propensity effect: Based on lattice gas automaton.
\emph{Phys. Stat. Mech. Appl.} {\bf 2021}, \emph{561},
125283.
 


\bibitem{VK} Van Kampen, N.G. \emph{Stochastic Processes in Physics and Chemistry}; Elsevier Science B.V.: Amsterdam, The Netherlands, 1992.

\bibitem{R} Ross, S.M. \emph{Introduction to Probability Models}; Academic Press: Oxford, UK, 2019.

\bibitem{War}  Irwin, J.O. The generalized Waring distribution applied to accident theory. 
\emph{J. Roy. Stat. Soc. A}  {\bf 1968}, \emph{131}, 205--225.


\bibitem{cor}  Corominas-Murtra, B;  Hanel, R.;  Zavojanni, L.;   Thurner, S. How driving rates determine the statistics of driven non-equilibrium systems with stationary distributions.
 \emph{Sci. Rep.} {\bf 2018},  \emph{8},  10837.


\bibitem{3} Biró, T.S.; Csillag, L.; Néda, Z. Transient dynamics in the random
growth and reset model. \emph{Entropy}
{\bf 2021}, \emph{23}, 306.

\bibitem{K} Kirkwood, J.R. \emph{Markov Processes}; CRC Press: Boca Raton, FL, USA, 2015.

\bibitem{A} Anderson, W.J. \emph{Continuous-Time Markov Chains: An Applications-Oriented Approach}; Springer-Verlag: New York, NY, USA, 1991. 


\bibitem{F} Feller, W. \emph{An Introduction to Probability Theory and Its Applications}; John Wiley nad Sons, Inc.: New York, NY, USA, 1968; Volume 1.






\end{thebibliography}
\end{document}